\documentclass{svproc}



%
%
\usepackage{graphicx}%
\usepackage{multirow}%
\usepackage{amsmath,amssymb,amsfonts}%
\usepackage{mathrsfs}%
\usepackage[title]{appendix}%
\usepackage{xcolor}%
\usepackage{textcomp}%
\usepackage{manyfoot}%
\usepackage{booktabs}%
\usepackage{algorithm}%
\usepackage{algorithmicx}%
\usepackage{algpseudocode}%
\usepackage{listings}%
\usepackage{url}
\usepackage{comment}
\usepackage{subcaption}
\usepackage{optidef}


\newcommand{\vol}{\mathrm{vol}}
\newcommand{\cut}{\mathrm{cut}}
\newcommand{\cutH}{\mathrm{cut}_\mathcal{H}}

\begin{document}
\mainmatter              
\title{Hypergraph Change Point Detection using Adapted Cardinality-Based Gadgets: \\
Applications in Dynamic Legal Structures}
\titlerunning{Hypergraph Change Point Detection}  
%
\author{Hiroki Matsumoto\inst{1} \and Takahiro Yoshida\inst{2} \and Ryoma Kondo\inst{1,2} \and \\ 
Ryohei Hisano\inst{1,2}}
\authorrunning{Hiroki Matsumoto et al.} 

\institute{Graduate School of Information Science and Technology, the University of Tokyo, 7-3-1 Hongo, Bunkyo-ku, Tokyo, 113-8656, Japan\\
\email{hisanor@g.ecc.u-tokyo.ac.jp}
\and
The Canon Institute for Global Studies,
11th Floor, ShinMarunouchi Building, 5-1 Marunouchi 1-chome, Chiyoda-ku, Tokyo 100-6511, Japan}

\maketitle              

\begin{abstract}

Hypergraphs provide a robust framework for modeling complex systems with higher-order interactions. However, analyzing them in dynamic settings presents significant computational challenges. To address this, we introduce a novel method that adapts the cardinality-based gadget to convert hypergraphs into strongly connected weighted directed graphs, complemented by a symmetrized combinatorial Laplacian. We demonstrate that the harmonic mean of the conductance and edge expansion of the original hypergraph can be upper-bounded by the conductance of the transformed directed graph, effectively preserving crucial cut information. Additionally, we analyze how the resulting Laplacian relates to that derived from the star expansion. Our approach was validated through change point detection experiments on both synthetic and real datasets, showing superior performance over clique and star expansions in maintaining spectral information in dynamic settings. Finally, we applied our method to analyze a dynamic legal hypergraph constructed from extensive United States court opinion data.

\keywords{Hypergraph, Change Point Detection, Spectral Methods, Legal Hypergraph}
\end{abstract}

\section{Introduction}

Hypergraphs offer a natural framework for analyzing complex relationships where the interactions of interest extend beyond only pairs. This makes them particularly well-suited for capturing the higher-order interactions of systems where the relationships are not merely one-to-one but rather involve sets of elements working together. For example, a hypergraph can be used to model collaborations among academics~\cite{Benson2018}, cosponsorship of congress bills~\cite{Fowler2006}, COVID-19 viral protein interactions~\cite{Gopalakrishnan2022}, and the functional connectivity of the human brain~\cite{Ma2021}. By accommodating these higher-order interactions, hypergraphs offer a richer and more flexible representation of complex systems, enabling deeper insights and more accurate  analyses.

Dynamic extensions of hypergraph modeling are also gradually gaining popularity. In \cite{Benson2018}, the authors explored  the temporal evolution of higher-order interactions within complex systems and identified  consistent patterns across various types of systems. They suggested higher-order link prediction as a benchmark to enhance the understanding and modeling of these structures. In \cite{Ito2023}, the authors introduced a novel hypergraph embedding method inspired by modularity maximization. This method visualizes structural changes by positioning hypernodes on concentric spheres. It achieved superior  performance in terms of spatial efficiency and the detection of structural changes. In \cite{Coupette2024}, the authors presented  hypergraphs as a powerful tool for analyzing higher-order interactions in legal networks. Through case studies spanning 70 years, they demonstrated the method's potential in legal citation and collaboration networks, which underscored its significance for advancing legal network analysis. However, modeling and analyzing large, dynamic hypergraphs remains a significant challenge because of their immense computational complexity.

To address the computational complexity in analyzing dynamic networks, such as change point detection, an effective strategy is to focus on the spectral properties of the Laplacian matrix. In \cite{Huang2020}, the authors introduced the Laplacian anomaly detection (LAD) method, which uses these spectral properties to create low-dimensional embeddings of graph snapshots. This approach effectively addresses the challenges associated with comparing graphs over time and capturing their temporal dependencies. In a subsequent study \cite{Huang2021}, the authors further refined this method using efficient approximation techniques related to the network density of states. This enhancement integrates the full spectrum of Laplacian analysis, thereby achieving performance that is comparable to state-of-the-art methods while significantly increasing processing speed. 


However, defining the Laplacian for hypergraphs is a challenging task. In \cite{Agarwal2006}, the authors demonstrated that many existing hypergraph constructions correspond to either the clique or star expansion with adjusted weighting functions. A theoretically sound definition that preserves conductance, in the spirit of \cite{Chung1997,Chung2005}, exists in the literature \cite{Takai2020}; however, it leads to a nonlinear operator, which complicates dynamic hypergraph analysis. Recent works \cite{Liu2021,Veldt2022} on directed graph reduction techniques offer a promising approach. In \cite{Liu2021}, the authors showed that by replacing each hyperedge with a set of cardinality-based gadgets (CB-gadgets), a hypergraph can be transformed into a directed graph  that preserves conductance. This implies that minimizing conductance in the transformed directed graph also minimizes conductance in the original hypergraph. However, the resulting directed network, as shown in \cite{Liu2021}, includes auxiliary nodes where the degree is assumed to be zero, which leads to an artificial network that is not suitable for direct spectral analysis.

In this paper, we adapt the CB-gadget to transform a hypergraph into a strongly connected weighted directed graph. Then we demonstrate that the harmonic mean of the conductance and edge expansion of the original hypergraph can be upper bounded by the conductance of the transformed weighted directed graph. This indicates that the transformed graph approximately preserves the original cut information. By ensuring that the transformed directed graph is strongly connected whenever the original hypergraph is connected, we further refine the weighted directed graph using the combinatorial Laplacian from \cite{Chung2005}. 
Furthermore, we examine the relationship between the resulting Laplacian and the one obtained through the star expansion using a similar analysis to that in \cite{Agarwal2006}.


To demonstrate that the spectrum derived from our resulting Laplacian better preserves the original hypergraph information, we validated our approach through change point detection using the Laplacian spectrum \cite{Huang2020}. We tested our method on both synthetic and real datasets. For the synthetic dataset, we simulated four ground-truth change points and sampled each segment's hypergraph using the method from \cite{Ruggeri2023}. Our results indicated that the proposed expansion method outperformed star and clique expansions in detecting these ground-truth change points. For the real dataset, we used United States court opinion data published by the Free Law Project \cite{FreeLawProject}. We demonstrated the effectiveness of our model by showing that it detected critical comments during significant shifts in the legal hypergraph.



\section{Methods}

\subsection{Hypergraph Conductance and Graph Reduction}

We consider the cut problem in hypergraphs. In the case of standard graphs, where each edge $e$ has a weight $w_e$,  and $e$ always connects exactly two nodes, the cut score of $e$ can simply be represented by $w_e$ when $e$ is part of the cut (thus naturally defining a cut function that assigns a cut score to a given subset $S \subset V$). However, in the case of hypergraphs, there is greater flexibility in how the cut score for subset $S$ can be assigned. To address the $s$-$t$ cut problem with this flexibility, in \cite{Veldt2022}, the authors introduced the concept of an edge-splitting function, which enables the formulation of a generalized $s$-$t$ cut problem in hypergraphs.


With each hyperedge, we associate a splitting function $f_e$ that we use to assess an appropriate penalty for splitting the hyperedge between two clusters. With an identified $f_e$, the cut value of any given set $S\subset V$ can be written as
\begin{equation}
    \cutH(S) = \sum_{e\in E}f_e(e\cap S) = \sum_{e\in \partial S}f_e(e\cap S) .     
\end{equation}

Under this definition, the hypergraph minimum $s$-$t$ cut problem is defined as follows:

\begin{mini}
    {S\subset V}          
    {\cutH(S, V\backslash S)} 
    {\label{hypergraph_st_cut}}  
    {} 
    \addConstraint{s \in S, t \in V\backslash S}. 
\end{mini}

Then we denote the (hypergraph) conductance by $\phi_{\mathcal{H}}(S)$ and the (hypergraph) edge expansion by $\psi_{\mathcal{H}}(S)$: 

\begin{equation}
    \phi_{\mathcal{H}}(S) = \frac{\cutH(S)}{\min\{\vol_{\mathcal{H}}(S), \vol_{\mathcal{H}}(V\backslash S)\}}, \; \psi_{\mathcal{H}}(S) = \frac{\cutH(S)}{\min\{|S|, |V\backslash S|\}},
\end{equation}


\noindent where $d_i = \sum_{e : i \in e} f_e(\{i\})$ and $\vol_{\mathcal{H}}(S) = \sum_{i \in S} d_i$. We define the conductance and edge expansion of $\mathcal{H}$ in a manner similar to that for $G$. It is well established that optimizing conductance, which involves solving the cut problem to minimize the conductance of a hypergraph, is inherently challenging. Although a theoretically sound definition that preserves conductance exists \cite{Takai2020}, it results in a nonlinear operator that is difficult to handle in a dynamic setting.

Another approach involves using graph reduction techniques to minimize hypergraph conductance \cite{Liu2021,Veldt2022}. In \cite{Liu2021}, the authors demonstrate that by replacing each hyperedge with a set of CB-gadgets, a hypergraph can be transformed into a directed graph while preserving conductance. This transformation implies that minimizing conductance in the resulting directed graph also minimizes conductance in the original hypergraph. However, to fully preserve hypergraph conductance, it is necessary to assume that the auxiliary nodes have zero degree (see Theorem 3.3 of \cite{Liu2021}), which complicates further spectral analysis of the resulting directed network, as it can no longer be treated as a strongly connected transition probability matrix of a directed network.

Therefore, based on the CB-gadget, we propose an adapted CB-gadget described as follows:

\begin{definition}[Adapted CB-gadget]
 Given a hypergraph $\mathcal{H} = (V, E)$, we define a set of auxiliary nodes $V_a$ and $V_b$, along with a set of directed edges $\hat{E}$, as follows\footnote{We assign two unique auxiliary nodes to each hyperedge.}:
\begin{enumerate}
\item For each $e \in E$, add two auxiliary nodes $a \in V_a$ and $b \in V_b$.
\item For each $v \in e$, add edges $(v, a)$ and $(b, v)$ with $w_{va}=w_{bv}=1$ to $\hat{E}$.
\item Add an edge $(a,a)$ and $(a, b)$ with $w_{aa} = 1-w_e,\, w_{ab}=w_e, $ to $\hat{E}$,
\end{enumerate}
where $w_e$ is normalized to be in $(0,1]$. 
\end{definition}
The resulting graph $G$ has the node set $\hat{V} = V \cup V_a \cup V_b$. This adaptation does not change the cut score and $w_e\leq 1$;  thus, $f_e$ becomes an all-or-nothing function, as shown in \cite{Veldt2022}:

\begin{equation}
    f_e(S) = \min\{|S|, |e\backslash S|, w_e\} = w_e.
\end{equation}

\noindent Our adaptation enables the consideration of probability transitions for a random walk that reflect the weights in the resulting graph.

In the following, we prove that the weighted harmonic mean of hypergraph conductance and edge expansion can be upper bounded   by twice the conductance of the directed graph obtained from the adapted CB-gadget. Before stating the main theorem, we first define $\mu_{\mathcal{H}}$ as follows:

\begin{equation}
    \mu_{\mathcal{H}}(S) = 2\frac{\cutH(S)}{\vol_{\mathcal{H}}(S) + \beta |S|},
\end{equation}

\noindent where $\beta = (\epsilon + 1)\nu$. We define $\epsilon = \max_{e\in E} |e|$ and $\nu = \max_{v\in S} |\{e\in E \mid v\in e\}|$. The following proposition shows that $\mu_{\mathcal{H}}(S)$ equals the harmonic mean of $\phi_{\mathcal{H}}(S)$ and $\frac{1}{\beta}\psi_{\mathcal{H}}(S)$ under reasonable conditions, which implies that it contains critical information about the hypergraph cut.

\begin{proposition}
If $\vol_{\mathcal{H}}(S)\leq \vol_{\mathcal{H}}(V\backslash S)$ and $|S| \leq |V\backslash S|$, then $\mu_{\mathcal{H}}(S)$ equals the harmonic mean of $\phi_{\mathcal{H}}(S)$ and $\frac{1}{\beta}\psi_{\mathcal{H}}(S)$.
\end{proposition}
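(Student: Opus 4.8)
The plan is to reduce everything to the defining formulas and perform a one-line algebraic simplification. First I would invoke the two hypotheses: since $\vol_{\mathcal{H}}(S)\le \vol_{\mathcal{H}}(V\backslash S)$ we have $\min\{\vol_{\mathcal{H}}(S),\vol_{\mathcal{H}}(V\backslash S)\}=\vol_{\mathcal{H}}(S)$, and since $|S|\le |V\backslash S|$ we have $\min\{|S|,|V\backslash S|\}=|S|$. Substituting these into the definitions yields the simplified expressions
\begin{equation}
\phi_{\mathcal{H}}(S)=\frac{\cutH(S)}{\vol_{\mathcal{H}}(S)},\qquad \frac{1}{\beta}\psi_{\mathcal{H}}(S)=\frac{\cutH(S)}{\beta|S|}.
\end{equation}

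Next I would write down the harmonic mean of these two quantities. Recalling that the harmonic mean of $a$ and $b$ is $2/(a^{-1}+b^{-1})$, we get
\begin{equation}
\frac{2}{\dfrac{1}{\phi_{\mathcal{H}}(S)}+\dfrac{\beta}{\psi_{\mathcal{H}}(S)}}
=\frac{2}{\dfrac{\vol_{\mathcal{H}}(S)}{\cutH(S)}+\dfrac{\beta|S|}{\cutH(S)}}
=\frac{2\,\cutH(S)}{\vol_{\mathcal{H}}(S)+\beta|S|}
=\mu_{\mathcal{H}}(S),
\end{equation}
which is exactly the claimed identity. The only thing to be careful about is the degenerate case $\cutH(S)=0$ (equivalently $S\notin\partial S$), where the reciprocal form of the harmonic mean is undefined; there I would simply note that all three of $\phi_{\mathcal{H}}(S)$, $\psi_{\mathcal{H}}(S)$, and $\mu_{\mathcal{H}}(S)$ vanish, so the identity holds under the usual convention that the harmonic mean of two zeros is zero, and otherwise $\cutH(S)>0$ makes every denominator above strictly positive so the manipulation is valid.

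I do not anticipate a genuine obstacle here: the statement is essentially a bookkeeping lemma that justifies interpreting $\mu_{\mathcal{H}}(S)$ as a proxy carrying both the conductance and the edge-expansion information of the hypergraph cut. The mild subtlety worth stating explicitly is only that the hypotheses $\vol_{\mathcal{H}}(S)\le\vol_{\mathcal{H}}(V\backslash S)$ and $|S|\le|V\backslash S|$ are precisely what align the two minima with the numerator $\vol_{\mathcal{H}}(S)+\beta|S|$ appearing in $\mu_{\mathcal{H}}$; without them one would instead obtain the harmonic mean in terms of the complementary volume and cardinality.
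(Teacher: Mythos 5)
Your proof is correct and follows essentially the same route as the paper: resolve the two minima via the hypotheses and compute $\frac{1}{\phi_{\mathcal{H}}(S)}+\frac{\beta}{\psi_{\mathcal{H}}(S)}=\frac{\vol_{\mathcal{H}}(S)+\beta|S|}{\cutH(S)}=\frac{2}{\mu_{\mathcal{H}}(S)}$. Your additional remark about the degenerate case $\cutH(S)=0$ is a harmless extra that the paper omits.
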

\begin{proof}
    \begin{align*}
        \frac{1}{\phi_{\mathcal{H}}(S)} + \frac{\beta}{\psi_{\mathcal{H}}(S)} &= \frac{\min\{\vol_{\mathcal{H}}(S), \vol_{\mathcal{H}}(V\backslash S)\}}{\cutH(S)} + \beta  \frac{\min\{|S|, |V\backslash S| \}  }{\cutH(S)}\\
        &= \frac{\vol_{\mathcal{H}}(S) + \beta S}{\cutH(S)} = \frac{2}{\mu_{\mathcal{H}}(S)}
    \end{align*}
    \begin{equation*}
        \Rightarrow \mu_{\mathcal{H}}(S) = 2\left( \frac{\psi_{\mathcal{H}}(S)+\beta\phi_{\mathcal{H}}(S)}{\psi_{\mathcal{H}}(S)\phi_{\mathcal{H}}(S)} \right)^{-1} = 2\frac{\phi_{\mathcal{H}}(S)\cdot \psi_{\mathcal{H}}(S)/\beta}{\phi_{\mathcal{H}}(S)+\psi_{\mathcal{H}}(S)/\beta}.
    \end{equation*}
\end{proof}


Using this quantity, we prove our main result. Let $\mathcal{T}$ be the set of $T\subset \hat{V}$ such that $\forall u\in T\cap(V_a\cup V_b)$ has an $u\in V$ and $u$ is $u$'s out-neighbor or in-neighbor. Formally, we define

\begin{equation}
    \mathcal{T} := \{ T\subset \hat{V} \mid \forall u\in T\cap(V_a\cup V_b),\, \exists v\in T\cap V,\, (u, v)\in \hat{E} \lor (v, u)\in \hat{E} \}.
\end{equation}

Now we show that $\mu_{\mathcal{H}}(S)$ can be upper bounded by twice the conductance of $G$.


\begin{theorem}
    $\forall T\subset \mathcal{T} \subset V$ and $S=T\cap V$,
    \begin{equation}
        2\phi_G(T)\geq \mu_{\mathcal{H}}(S).
    \end{equation}
\end{theorem}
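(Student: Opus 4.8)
The plan is to discard the minimum in the definition of the directed conductance and then compare the numerator and denominator of $\phi_G(T)$ separately against the hypergraph quantities built into $\mu_{\mathcal H}(S)$. Since $\min\{\vol_G(T),\vol_G(\hat V\setminus T)\}\le \vol_G(T)$, we have $\phi_G(T)\ge \cut_G(T)/\vol_G(T)$, so it suffices to establish the two one-sided estimates
\[
\cut_G(T)\ \ge\ \cutH(S)\qquad\text{and}\qquad \vol_G(T)\ \le\ \vol_{\mathcal H}(S)+\beta\,|S| .
\]
Chaining these gives $\phi_G(T)\ge \cutH(S)/\big(\vol_{\mathcal H}(S)+\beta|S|\big)=\tfrac12\,\mu_{\mathcal H}(S)$, which is the assertion; in particular the factor $2$ in the theorem is exactly the factor $2$ built into $\mu_{\mathcal H}$. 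Both estimates are proved hyperedge by hyperedge, using that $\hat E$ is the disjoint union of the gadget edge sets attached to the $e\in E$.

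For the cut estimate I would note that $\cut_G(T)$ is the sum over $e\in E$ of the weight of the gadget-$e$ edges that cross the partition $(T,\hat V\setminus T)$; each such term is nonnegative, so it is enough to show that every hyperedge split by $S$ (i.e.\ $e\cap S\ne\emptyset$ and $e\setminus S\ne\emptyset$) contributes at least $w_e=f_e(e\cap S)$. This is a three-way case analysis on the auxiliary nodes $a_e,b_e$: if $a_e\notin T$ then the $|e\cap S|\ge 1$ unit-weight edges $(v,a_e)$ with $v\in e\cap S$ cross the cut; if $a_e\in T$ and $b_e\notin T$ then the single edge $(a_e,b_e)$ of weight $w_e$ crosses; and if $a_e,b_e\in T$ then the $|e\setminus S|\ge 1$ unit-weight edges $(b_e,v)$ with $v\in e\setminus S$ cross. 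Since $w_e\le 1$, each case contributes at least $w_e$, and summing over $e\in\partial S$ gives $\cut_G(T)\ge\sum_{e\in\partial S}w_e=\cutH(S)$. This step does not use $T\in\mathcal T$.

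The volume estimate is the real work, and it is where the hypothesis $T\in\mathcal T$ and the exact value $\beta=(\epsilon+1)\nu$ enter. Splitting $T$ into $S$, $T\cap V_a$, $T\cap V_b$, one has: each original vertex $v\in S$ has out-degree $\sum_{e:v\in e}1$ in $G$, which exceeds $d_v=\sum_{e:v\in e}w_e$ by $\sum_{e:v\in e}(1-w_e)$; each $a_e\in T$ contributes its out-edge to $b_e$ of weight $w_e$ (the self-loop of weight $1-w_e$ carries no cut information and is not charged to $\vol_G$); and each $b_e\in T$ contributes $|e|\le\epsilon$. The membership condition defining $\mathcal T$ forces every auxiliary node of $T$ to lie on a hyperedge meeting $S$, so $|T\cap V_a|$ and $|T\cap V_b|$ are each at most the number of hyperedges meeting $S$, itself at most $\sum_{v\in S}|\{e:v\in e\}|\le\nu|S|$. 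The subtle point — and the main obstacle — is the bookkeeping that turns these bounds into exactly $\vol_{\mathcal H}(S)+(\epsilon+1)\nu|S|$: one absorbs the $V_a$-contribution $\sum_{a_e\in T}w_e$ into the per-vertex slack $\sum_{v\in S}\sum_{e:v\in e}(1-w_e)$ (using $(1-w_e)+w_e=1$ together with a witness $v\in e\cap S$ for each $a_e\in T$), so that $V_a$ adds nothing beyond the $\nu|S|$ already present and only $V_b$ contributes the further $\epsilon\nu|S|$. Getting this accounting tight — and fixing the convention that the $a_e$ self-loops are not counted toward the volume — is what makes the constant come out as $\epsilon+1$ rather than something larger; once it does, combining with the cut estimate completes the proof.
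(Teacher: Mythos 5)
Your overall route is the same as the paper's: discard the minimum so that $\phi_G(T)\ge \cut_G(T)/\vol_G(T)$, lower-bound the numerator by $\cutH(S)$, upper-bound the denominator by $\vol_{\mathcal H}(S)+\beta|S|$, and combine. Your cut estimate simply replaces the paper's citation of the cut-preservation result of Liu et al.\ (from which the paper deduces $\cut_G(T)\ge\cutH(S)$) by a direct per-hyperedge case analysis on $a_e,b_e$; that analysis is correct and makes the step self-contained. The membership condition $T\in\mathcal T$ is used exactly as in the paper, to bound $|T\cap V_a|$ and $|T\cap V_b|$ by $\nu|S|$.

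The one point you must be careful about is the volume convention. The paper defines $\vol_G(T)$ as the sum of out-degrees, and its own proof takes the out-degree of each $a_e$ to be (at most) $1$, i.e.\ with the self-loop of weight $1-w_e$ counted (consistent with the transition matrix, whose $a$-block row is $(I-W,\,W)$ and sums to $1$); it then writes $\sum_{v\in T\cap V}d_v\le\sum_{v\in S}d_v=\vol_{\mathcal H}(S)$, silently identifying the $G$-out-degree of an original vertex ($\sum_{e:v\in e}1$) with its hypergraph degree ($\sum_{e:v\in e}w_e$), which is exact only when $w_e\equiv 1$. You instead keep the slack $\sum_{e:v\in e}(1-w_e)$ explicit, drop the self-loops from $\vol_G$, and absorb the $V_a$-contribution $\sum_{a_e\in T}w_e$ into that slack via $(1-w_e)+w_e=1$; this accounting is tight and correct, but only under your loop-free convention, which is not the paper's stated definition. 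Since the volume sits in the denominator, bounding the loop-free volume does not bound $\phi_G(T)$ as defined with loops counted; indeed with loops counted the intermediate inequality $\vol_G(T)\le\vol_{\mathcal H}(S)+(\epsilon+1)\nu|S|$ can fail for $w_e<1$ (e.g.\ a single hyperedge $e=\{1,2\}$ with $w_e=\delta$ small, $S=\{1\}$, $T=\{1,a_e,b_e\}$ gives $\vol_G(T)=4>3+\delta$), so the constant would degrade to $(\epsilon+2)\nu$. This discrepancy is latent in the paper's own proof as well, and your argument coincides with the paper's when all $w_e=1$; but to prove the theorem as literally stated you need to either make the loop-free volume convention explicit (and verify it is the one used for $\phi_G$) or assume unit hyperedge weights, so as written there is a gap between what you bound and the quantity appearing in the theorem.
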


\begin{proof}
    Without loss of generality, we assume that $\vol_{\mathcal{H}}(T)\leq \vol_{\mathcal{H}}(\hat{V}\backslash T)$. 
    
    Then    \begin{equation}
        \phi_G(T) = \frac{\cut_G(T)}{\min\{ \vol_G(T), \vol_G(\hat{V}\backslash T) \}} = \frac{\cut_G(T)}{\vol_G(T)}
    \end{equation}
    where $\vol_G(T)$ denotes the sum of the out-degrees of each $v\in T$, and 
    \begin{align}
        &\cutH(S) = \min_{U\subset \hat{V}:S=U\cap V}\cut_G(U) \quad (\because \cite{Liu2021})\\
        &\Rightarrow \cut_G(T)\geq \cutH(S).
    \end{align}

    Recall that $\epsilon = \max_{e\in E} |e|$ and $\nu = \max_{v\in S} |\{e\in E \mid v\in e\}|$. Then we have
    \begin{align}
        \vol_G(T) &= \sum_{v\in T}d_v = \sum_{v\in (T\cap V)}d_v + \sum_{v\in (T\cap V_a)}d_v + \sum_{v\in (T\cap V_b)}d_v\\
        &\leq \sum_{v\in S}d_v + \sum_{v\in (T\cap V_a)}1 + \sum_{v\in (T\cap V_b)}\epsilon = \vol_{\mathcal{H}}(S) + |T\cap V_a| + \epsilon|T\cap V_b|\\
        &\leq \vol_{\mathcal{H}}(S) + (\epsilon+1)\nu|S|
    \end{align}

    \begin{equation}
        \therefore 2\phi_G(T) = \frac{2\cut_G(T)}{\vol_G(T)}\geq \frac{2\cutH(T)}{\vol_{\mathcal{H}}(S)+\beta |S|} = \mu_{\mathcal{H}}(S).
    \end{equation}
\end{proof}

Consequently, if a subset $ T \subset \mathcal{T}$ can be identified that achieves sufficiently low conductance on $ G $, then by considering $ S = T \cap V $, we can obtain a vertex set that achieves a correspondingly small $ \mu_{\mathcal{H}} $. This provides a solid justification for optimizing the directed graph reduced by the adapted CB gadget.

The resulting directed graph can be easily shown to be strongly connected if the hypergraph is connected. Since this follows directly from the definition of the CB-gadget, we omit the proof for brevity. This guarantees the existence of a stationary distribution for a random walk on $G$ when $H$ is connected.


\subsection{Comparison Between Our Approach and Star Expansion}

Perhaps the simplest method for performing spectral analysis on directed graphs is Chung's approach \cite{Palmer2021}, which is well-established for preserving cut information. Our adaptation of the CB-gadget enables us to treat each edge weight as a transition probability. We directly apply Chung's combinatorial Laplacian (i.e., $L := \Phi - \frac{\Phi P + P^\top \Phi}{2}$), where $P$ is the transition probability matrix for a random walk on the directed graph, and $\Phi$ is a diagonal matrix containing the unique stationary distribution vector. This transformation is well known for preserving cut information, especially for strongly connected components~\cite{Yoshida2016}.


In the following, we show that under the following situation our proposed Laplacian and that derived from star expansion can fall under the same framework of \cite{Agarwal2006}. Let $D = \mathrm{diag}(d_1,...,d_n)$ and $\Delta = \mathrm{diag}(\delta(1),...,\delta(m))$ with $\delta(e)=|e|$ for $e\in E$. Our resulting Laplacian $L$ derived above and the corresponding transition matrix $P$ can be written as follows.

\begin{equation}
    L = \Phi - \frac{\Phi P+ P^\top\Phi}{2} 
      = \begin{pmatrix}
           \Phi_v & -\frac{\Phi_vD^{-1} H}{2} & -\frac{H \Delta^{-1} \Phi_b}{2}\\
           -\frac{H^\top D^{-1}\Phi_v}{2} & \Phi_a W & -\frac{\Phi_a W}{2}\\
           -\frac{\Phi_b\Delta^{-1}H^\top}{2}& -\frac{W\Phi_a}{2} & \Phi_b
        \end{pmatrix}
\end{equation}
\begin{equation*}
    P = \begin{pmatrix}
           O & D^{-1}H & O\\
           O & I-W & W\\
           \Delta^{-1}H^\top & O & O
    \end{pmatrix}
\end{equation*}
where $\Phi_v, \Phi_a, \Phi_b$ is  diagonal matrix consisting of $\phi(v) \,\mathrm{for}\, v\in V$, $\phi(v) \,\mathrm{for}\, v\in V_a$, $\phi(v) \,\mathrm{for}\, v\in V_b$.


\cite{Agarwal2006} demonstrated that traditional Laplacians for hypergraphs fall into the same analytical framework.  Based on the normalized Laplacian eigenvalue problem on star expansion, \cite{Agarwal2006} derived the node-related eigenvalue problem

\begin{equation}
    A_*A_*^\top x_v = (\lambda -1)^2 x_v
    \label{star eigenvalue problem}
\end{equation}
when denoting $A_* = D_v^{1/2}HW_*D_v^{1/2}$ with $w_*(e) = w(e)/\delta(e)$ and $\lambda$ is an eigenvalue of the normalized Laplacian. The same argument applies when $w_{*}(e)$ is set to 1, which we use for comparison.

By further normalizing our proposed Laplacian and considering a scenario where the edge weights in the expanded or reduced graph are identical, we can prove the following. If $W=I$, we have 
$$
\phi = \frac{1}{\mathrm{vol}(G)}(d_1, \ldots, d_n, \delta(1), \ldots, \delta(m), \delta(1), \ldots, \delta(m))
$$ as the unique stationary distribution vector:
\begin{align*}
    \phi P &= \frac{1}{\mathrm{vol}(G)}(\delta(1), \ldots, \delta(m))\\
    &= \frac{1}{\mathrm{vol}(G)}\begin{pmatrix}
        (\delta(1), \ldots, \delta(m))\Delta^{-1}H^\top \,&\, (d_1, \ldots, d_n)D^{-1}H \,&\, (\delta(1), \ldots, \delta(m))
    \end{pmatrix}\\
    &= \frac{1}{\mathrm{vol}(G)}\begin{pmatrix}
        \mathbf{1}^\top H^\top \,&\, \mathbf{1}^\top H \,&\, (\delta(1), \ldots, \delta(m))
    \end{pmatrix}\\
    &= \frac{1}{\mathrm{vol}(G)}(d_1, \ldots, d_n, \delta(1), \ldots, \delta(m), \delta(1), \ldots, \delta(m))
\end{align*}

This allows us to write

\begin{equation}
    \mathcal{L} = \Phi^{-1/2}L\Phi^{-1/2} = 
    \begin{pmatrix}
           I & -A/2 & -A/2\\
           -A^\top/2 & I & -I/2\\
           -A^\top/2 & -I/2 & I
    \end{pmatrix}
\end{equation}
with $A = D^{-1/2}H\Delta^{-1/2}$. In the similar way as \cite{Agarwal2006}, we can derive
\begin{equation}
    AA^\top x_v = (1-\lambda)(1-2\lambda) x_v
    \label{proposed eigenvalue problem}
\end{equation}

Thus, building on the analysis presented in \cite{Agarwal2006}, it can be shown that $A$ equals $A_*$ when the weight $W_*$ is artificially set to $I$, and that (\ref{star eigenvalue problem}) corresponds to the eigenvalue problem in (\ref{proposed eigenvalue problem}). However, this argument focuses on the normalized version of our Laplacian, assuming an artificial scenario in which all edges after expansion are identical. Future theoretical analysis may elucidate the connection between the two, but in the current paper we posit that this distinction lends our approach its unique character.


\subsection{Laplacian Anomaly Detection }




Because we aim to assess whether our approach better preserves spectral information in a dynamic setting, we use LAD~\cite{Huang2020} as our change point detection method. The basic idea behind LAD is to use the top $K$ singular values (or eigenvalues in our case) of the graph Laplacian at each time point as the embedding vector $v_{t}$ for the network. After this embedding vector is normalized,  the context matrix $C:=[v_{t-l},v_{t-l+1},\dots,v_{t-1}]$ is constructed by aggregating information from past network embedding vectors, where $l$ represents the window that determines the breadth of the lookback period. Using this context matrix, we can estimate the embedding vector $\hat{v_{t}}$ for time $t$ based on past behavior. We calculate the anomaly score by comparing the actual embedding vector $v_{t}$ at time $t$ with the predicted embedding vector from the context matrix. Specifically, we compute the anomaly score $Z$ as $Z = 1 - v_{t}^{\top} \hat{v}_{t}$ and the change point score as $\hat{Z}_{t} = \min(Z_{t} - Z_{t-1}, 0)$.










\section{Data}

\subsection{Synthetic Dataset with Ground-Truth Change Points}

We design the synthetic hypergraph dataset to model dynamic changes through four distinct change points. Initially, we divide 60 nodes into three clusters. The first change point involves reassigning some nodes within these clusters. At the second change point, we introduce a new cluster and reassign nodes from the existing clusters to this new cluster. At the third change point, we add complexity by introducing another new cluster and reassigning some nodes to it, along with adding 10 new nodes, all assigned to this new cluster. Finally, at the fourth change point, we remove a cluster and redistribute its nodes across the remaining clusters. We sample each hypergraph for each period using the code provided in \cite{Ruggeri2023}, with changes occurring every 30 time points. We set the block structure to be nearly diagonal, with small weights added to the off-diagonal parts. We sample 50 datasets for robust evaluation. We sample the hypergraph at multiple time intervals throughout these changes, providing a robust dataset for analyzing dynamic hypergraphs.

\subsection{Dynamic Legal Hypergraphs}

Court opinions are a rich source of information that blend factual evidence with legal principles to shape judicial decisions~\cite{Kondo2024}. Legal interpretation is a complex process in which court opinions and statutory laws are combined to construct the legal framework. In \cite{Hisano2024}, the authors showed that constructing a hypergraph, where each court judgment's laws and precedents form a single hyperedge, provides valuable insights into the legal structure. This approach employs nested degree-corrected stochastic block models \cite{Peixoto2014} applied to a proprietary Japanese dataset. Building on this work, we create a dynamic legal hypergraph using publicly available U.S. court opinion data.


\begin{figure}[h]%
\centering
\begin{subfigure}{0.48\textwidth}
    \centering
    \includegraphics[width=\textwidth]{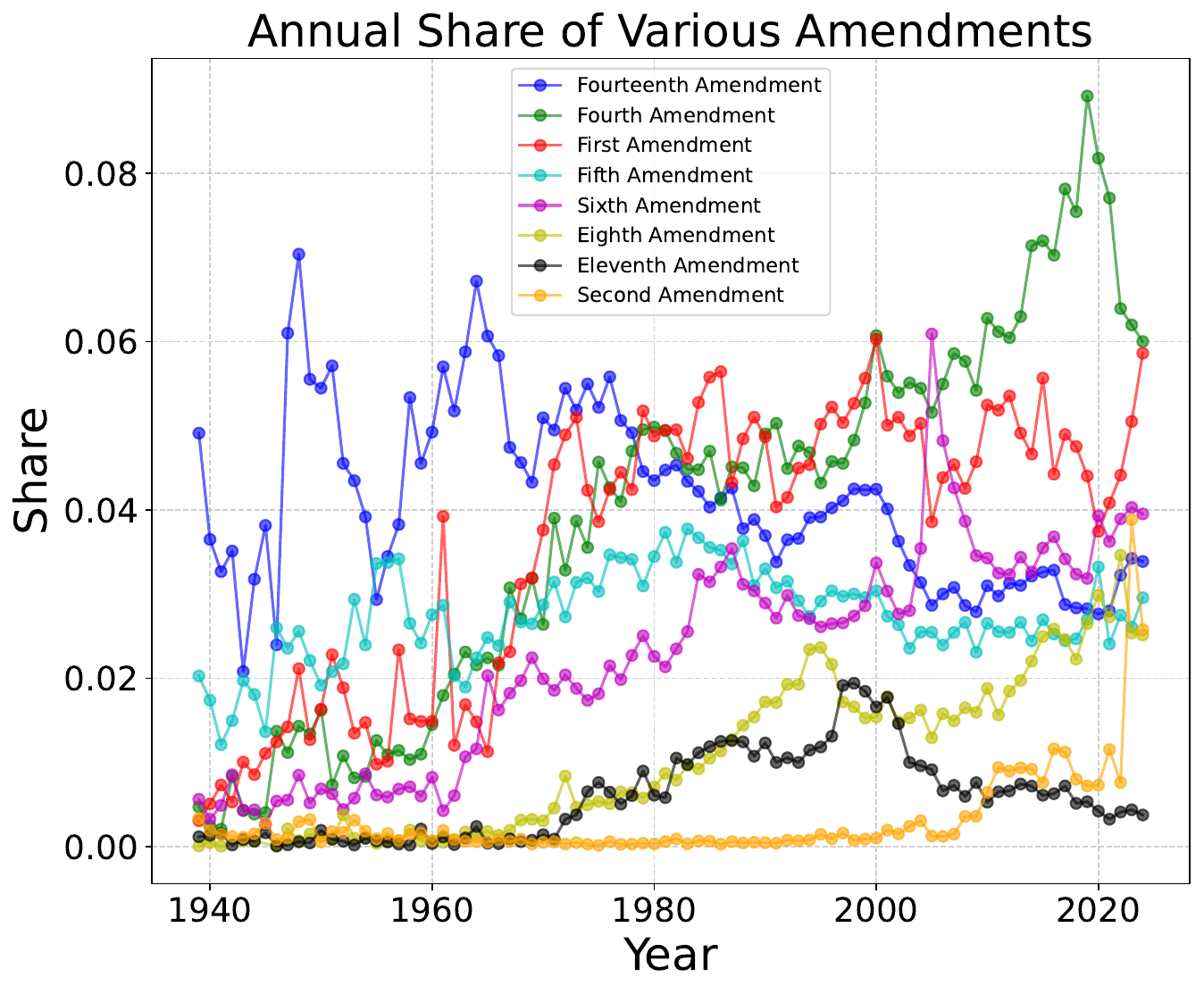}
    \caption{Time series plot depicting the share of the top eight amendments from 1939 to 2024.}
    \label{fig:ts-amend}
\end{subfigure}
\hfill
\begin{subfigure}{0.48\textwidth}
    \centering
    \includegraphics[width=\textwidth]{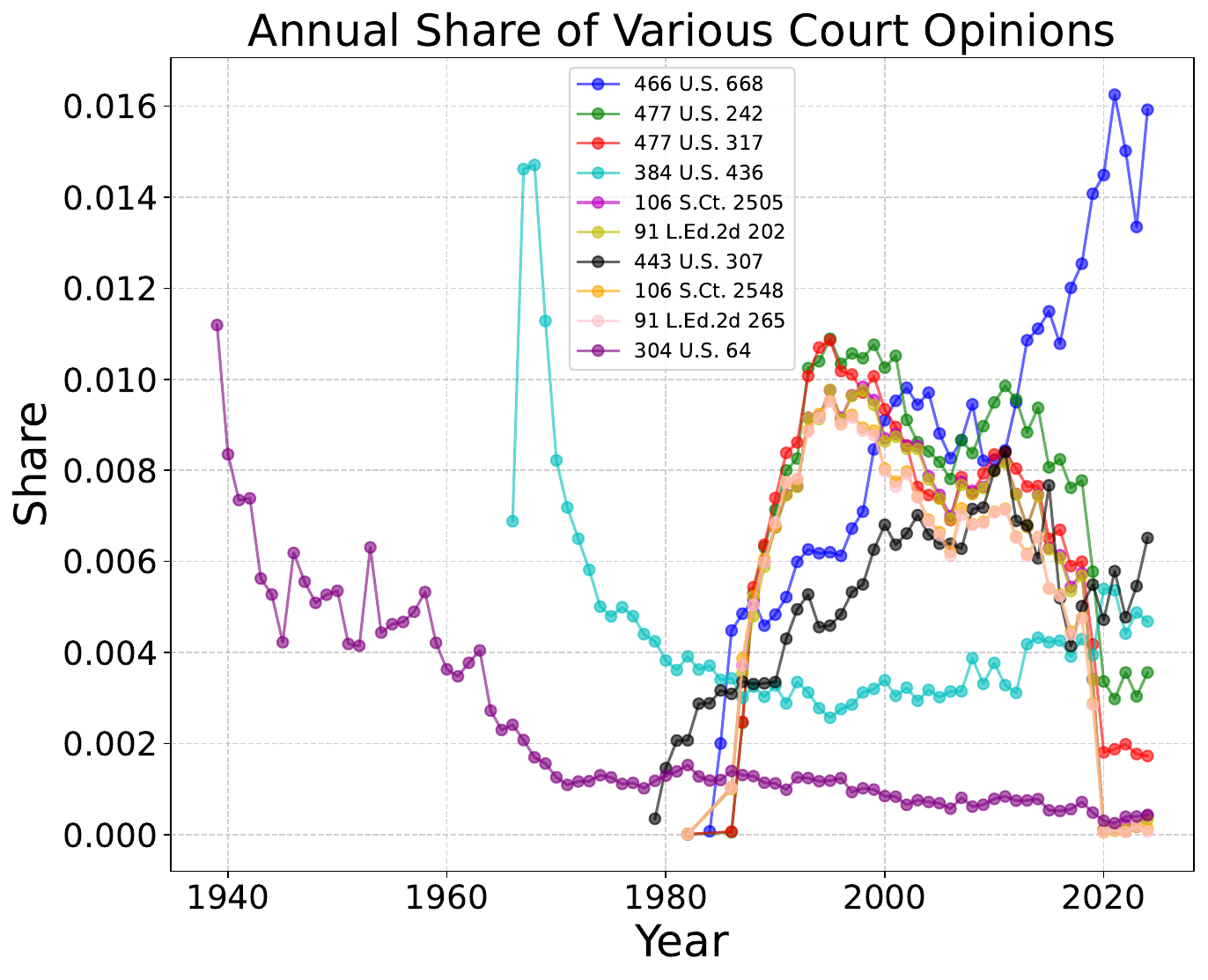}
    \caption{Time series plot depicting the share of the top 10 U.S. court opinions from 1939 to 2024.}
    \label{fig:ts-op}
\end{subfigure}
\caption{Time series plot depicting the share over time.}
\label{fig:ts}
\end{figure}

We obtained U.S. court opinion data from the Free Law Project \cite{FreeLawProject} using the dataset version dated May 6, 2024. We used this data to extract citation information and access opinion-cluster data to determine the publication dates of the court opinions. We extracted citation information using Eyecite, which is a tool provided by the Free Law Project \cite{Cushman2021}.


To manage the size of the dataset, we implemented a series of steps to refine the data for analysis. First, we concentrated on the top 1,000 most frequent amendments, legal statutes, and court opinions across multiple 10-year periods, ensuring comprehensive coverage of the entire dataset. We compiled these selected pairs into a list, ultimately narrowing down the data to 16,532 key items. Next, we identified the years for analysis by focusing on those that contained sufficient court opinions to generate meaningful hypergraphs. Specifically, we looked for periods of consecutive years where the count consistently exceeded 10,000. From these identified periods, we selected the most recent span for which legal activity remained above this threshold, which resulted in focusing on the years 1939--2024. In Fig.\ref{fig:ts}, we present the dynamic share of the top eight most-cited amendments and the top 10 most-cited U.S. court opinions from 1939 to 2024. These trends reveal noticeable dynamics that reflect the evolving legal structure of the U.S. court opinion data. All the synthetic datasets as well as the processed real datasets are available in our published code\footnote{\url{https://github.com/hisanor013/HypergraphCPD}}.

\section{Results}

\subsection{Synthetic Dataset}

First, we tested whether the adapted CB-gadget better preserved the spectral properties of hypergraphs compared with clique and star expansions. For change point detection, we used LAD \cite{Huang2020}, which identifies potential change points based on an anomaly score. We set a tolerance of ±2, considering a prediction correct if the true change point was within two time points. We evaluated performance using three metrics: the F1-score when predicting $3\%$ of the time points (four change points, matching the ground truth), the average F1-score for predictions using $3\%$--$15\%$ of all time points, and the average timing error when predicting $3\%$ of the time points. The timing error measures how close the predicted change points are to the actual change points by averaging the absolute differences between each true change point and its nearest prediction. The lower the timing error, the better the temporal accuracy. For the clique expansion, we used the full spectra. For the star and adapted CB-gadget expansions, we limited the analysis to the top 100 eigenvalues of the spectrum. We set the context window for LAD to 20.

\begin{table}[h]
\centering
\caption{Performance Comparison of Reduction Methods}
\label{table:reduction_methods}%
\begin{tabular}{@{}llll@{}}
\toprule
Reduction Method       & F1-Score (0.03) & Average F1 (0.03-0.15) & Timing Error\\
\midrule
Clique                 & 0.240           & 0.272    & 13.56  \\
Star                   & 0.675           & 0.450    & 8.01   \\
Adapted CB Gadget     & \textbf{0.750}  & \textbf{0.483}         & \textbf{6.07} \\
\bottomrule
\end{tabular}
\end{table}

\begin{figure}[h]%
\centering
\begin{subfigure}{0.33\textwidth}
    \centering
    \includegraphics[width=\textwidth]{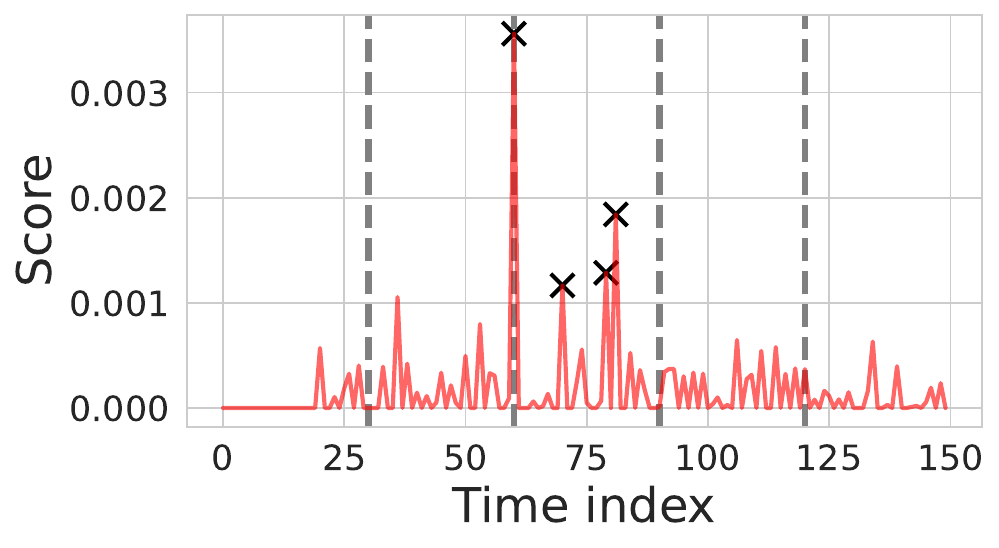}
    \caption{Clique}
    \label{fig:subfig1}
\end{subfigure}
\hfill
\begin{subfigure}{0.325\textwidth}
    \centering
    \includegraphics[width=\textwidth]{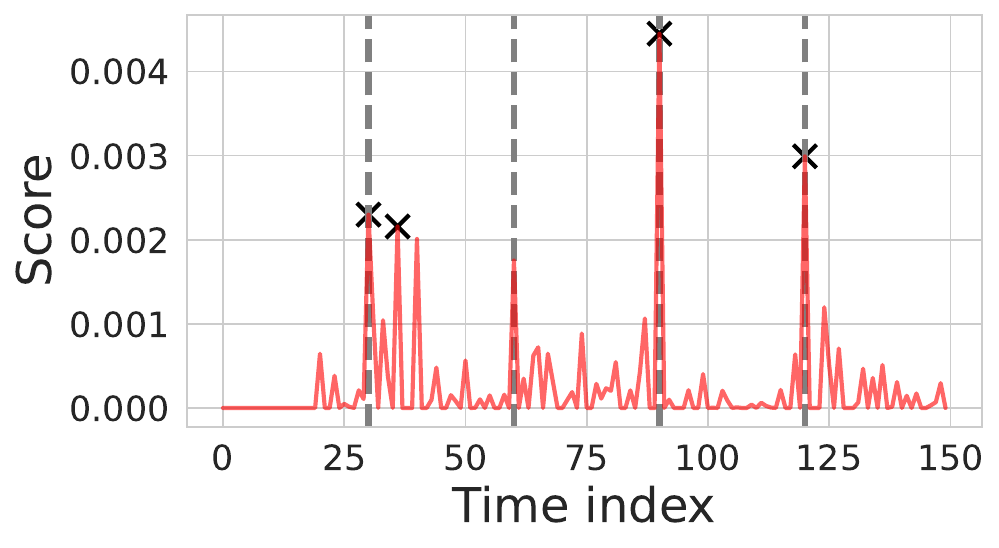}
    \caption{Star}
    \label{fig:subfig2}
\end{subfigure}
\hfill
\begin{subfigure}{0.325\textwidth}
    \centering
    \includegraphics[width=\textwidth]{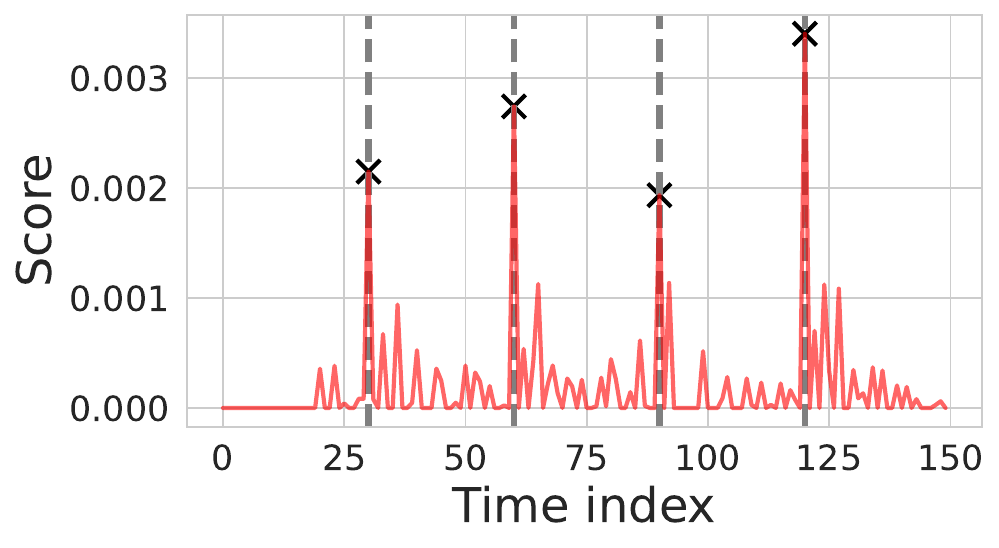}
    \caption{Adapted CB-gadget}
    \label{Adapted CB-gadget}
\end{subfigure}
\caption{Estimated performance of three reduction techniques on a synthetic dataset. The top $3\%$ of anomalous time points are marked by 'X,' and the vertical dashed line indicates the ground truth change points.}
\label{fig:synthetic-cp}
\end{figure}



Table~\ref{table:reduction_methods} presents the results, with each metric calculated as the average over 50 synthetic datasets. We observed that, across all evaluation metrics, the adapted CB-gadget outperformed both star and clique expansions. As demonstrated in the model selection process, the star expansion and our adapted CB-gadget share similar characteristics, which led to comparable performance. However, the added expressiveness of the adapted CB-gadget, which preserves the approximate conductance of the original hypergraph, resulted in superior performance. Additionally, we include figures that show the performance of each approach on a single synthetic dataset for reference (Fig.\ref{fig:synthetic-cp}).

\subsection{Dynamic Legal Hypergraph}

For the real dataset, we configured LAD to identify the top 0.05 dates (i.e., four dates) with the highest change point scores. These results are displayed in Fig. \ref{fig:us-cp}. The algorithm pinpointed change points in the years 1965, 1967, 1988, and 2020. The years 1965 and 1967 align with the Warren Court era, marked by the landmark decision in Miranda v. Arizona, 384 U.S. 436 (1966). This case revolutionized the criminal code and its impact  is evident: the citation count surged in 1966, which made the ruling one of the most frequently cited to this day.

The change point detected in 1988 can be attributed to significant shifts in legal standards established by key Supreme Court decisions in the mid-1980s. Notably, Strickland v. Washington, 466 U.S. 668 (1984), set a precedent ensuring that court proceedings comply with the Sixth Amendment's requirement for a fair trial. Subsequent decisions, Anderson v. Liberty Lobby, Inc., 477 U.S. 242 (1986) and Celotex Corp. v. Catrett, 477 U.S. 317 (1986), part of the 1986 summary judgment trilogy \cite{Yamamoto1990}, refined the criteria for summary judgment, emphasizing the sufficiency of evidence required for a case to proceed to trial. These rulings collectively reshaped the legal landscape by influencing how courts evaluate procedural fairness, thereby fostering more modern and standardized approaches in legal adjudication.


The 2020 change point presents a unique challenge. The pandemic's escalation led to an increase in surveillance technologies to track the virus, raising significant privacy concerns. This, in turn, is likely to have spurred discussions and legal challenges concerning the balance between public health needs and individual privacy rights, thereby leading to an increased citation of the Fourth Amendment. As illustrated in the figure, the prominence of the Fourth Amendment peaked in 2020, which may be attributed to these developments.

\begin{figure}[h]%
\centering
\includegraphics[width=0.7\textwidth]{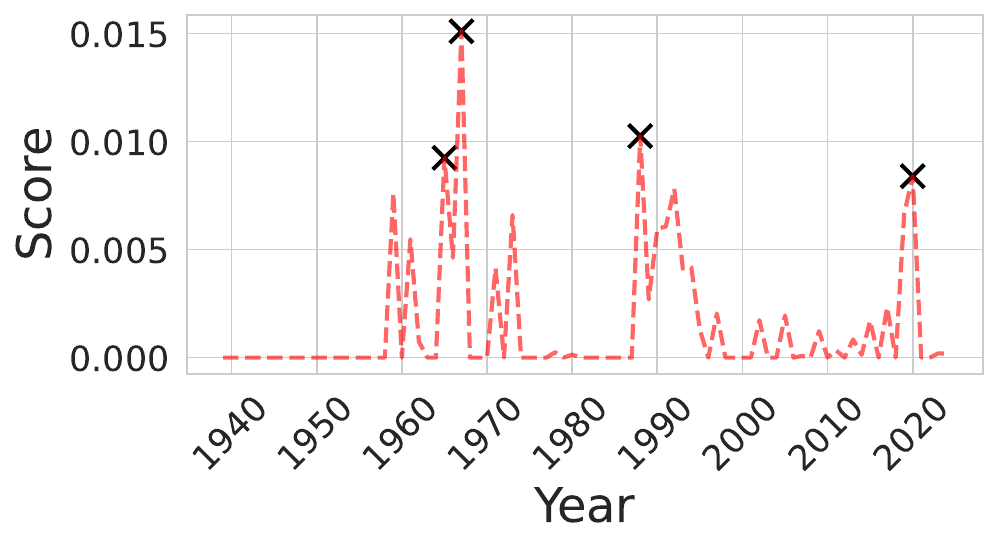}
\caption{Results for the real data. The top $5\%$ of anomalous time points are marked by 'X,'}
\label{fig:us-cp}
\end{figure}

\section{Conclusion}

In this paper, we introduced a novel method for transforming hypergraphs into strongly connected weighted directed graphs using an adapted CB-gadget and a symmetrized combinatorial Laplacian. We proved that the conductance of the transformed graph upper bounded the harmonic mean of the conductance and edge expansion of the original hypergraph, thereby preserving essential cut information. Additionally, We also provided analysis showing how our Laplacian relates to that derived from the star expansion. Experiments on both synthetic and real datasets, including a dynamic legal hypergraph from U.S. court opinions, demonstrated that our method outperformed existing clique and star expansions in preserving spectral information required for change point detection.

\section{Acknowledgment}
This research was supported by JST FOREST Program (Grant Number JPMJFR216Q), the UTEC-UTokyo FSI Research Grant Program, and the Grant-in-Aid for Scientific Research (KAKENHI) (JP24K03043). We thank Edanz for editing a draft of this manuscript.

\bibliographystyle{spmpsci} 
\bibliography{refs} 

\begin{thebibliography}{10}
\providecommand{\url}[1]{{#1}}
\providecommand{\urlprefix}{URL }
\expandafter\ifx\csname urlstyle\endcsname\relax
  \providecommand{\doi}[1]{DOI~\discretionary{}{}{}#1}\else
  \providecommand{\doi}{DOI~\discretionary{}{}{}\begingroup \urlstyle{rm}\Url}\fi

\bibitem{Agarwal2006}
Agarwal, S., Branson, K., Belongie, S.: Higher order learning with graphs.
\newblock In: Proceedings of the 23rd International Conference on Machine Learning, ICML '06, pp. 17--24. ACM, New York, NY, USA (2006)

\bibitem{Benson2018}
Benson, A.R., Abebe, R., Schaub, M.T., Jadbabaie, A., Kleinberg, J.: Simplicial closure and higher-order link prediction.
\newblock Proceedings of the National Academy of Sciences \textbf{115}(48), E11,221--E11,230 (2018)

\bibitem{Chung1997}
Chung, F.R.K.: Spectral Graph Theory, \emph{CBMS Regional Conference Series in Mathematics}, vol.~92.
\newblock American Mathematical Society, Providence, RI (1997)

\bibitem{Chung2005}
Chung, F.R.K.: Laplacians and the cheeger inequality for directed graphs.
\newblock Annals of Combinatorics \textbf{9}(1), 1--19 (2005)

\bibitem{Coupette2024}
Coupette, C., Hartung, D., Katz, D.M.: Legal hypergraphs.
\newblock Philosophical Transactions of the Royal Society A \textbf{382}(2270), 20230,141 (2024)

\bibitem{Cushman2021}
Cushman, J., Dahl, M., Lissner, M.: eyecite: A tool for parsing legal citations.
\newblock Journal of Open Source Software \textbf{6}(66), 3617 (2021)

\bibitem{Fowler2006}
Fowler, J.H.: Legislative cosponsorship networks in the us house and senate.
\newblock Social Networks \textbf{28}(4), 454--465 (2006)

\bibitem{Gopalakrishnan2022}
Gopalakrishnan, S., Sridharan, S., Nayak, S.R., Nayak, J., Venkataraman, S.: Central hubs prediction for bio networks by directed hypergraph - ga with validation to covid-19 ppi.
\newblock Pattern Recognition Letters \textbf{153}, 246--253 (2022)

\bibitem{Hisano2024}
Hisano, R., Ohnishi, T., Watanabe, T.: From Network Learning to Economics and Legal Analysis.
\newblock Science-sha, Japan (2024).
\newblock Original title: \textit{Nettowāku Gakushū kara Keizai to Hō Bunseki e}

\bibitem{Huang2020}
Huang, S., Hitti, Y., Rabusseau, G., Rabbany, R.: Laplacian change point detection for dynamic graphs.
\newblock In: Proceedings of the 26th ACM SIGKDD International Conference on Knowledge Discovery \& Data Mining, KDD '20, pp. 349--358. ACM, New York, NY, USA (2020)

\bibitem{Huang2021}
Huang, S., Rabusseau, G., Rabbany, R.: Scalable change point detection for dynamic graphs (2021).
\newblock \urlprefix\url{https://api.semanticscholar.org/CorpusID:237109000}

\bibitem{Ito2023}
Ito, S., Fushimi, T.: Dynamic hypergraph embedding onto concentric hypersphere manifold intended for effective visualization.
\newblock Applied Network Science \textbf{8}(1), 41 (2023)

\bibitem{Kondo2024}
Kondo, R., Watanabe, T., Yoshida, T., Yamasawa, K., Hisano, R.: Collaborative system synergizing human expertise and large-scale language models for legal knowledge graph construction.
\newblock In: Proceedings of the 23rd International Semantic Web Conference: Posters, Demos, and Industry Tracks (2024)

\bibitem{Liu2021}
Liu, M., Veldt, N., Song, H., Li, P., Gleich, D.F.: Strongly local hypergraph diffusions for clustering and semi-supervised learning.
\newblock In: Proceedings of the Web Conference 2021, pp. 2092--2103. ACM, New York, NY, USA (2021)

\bibitem{Ma2021}
Ma, J., Wang, Y., Liu, B., Liu, W.: Accurately modeling the human brain functional correlations with hypergraph laplacian.
\newblock Neurocomputing \textbf{428}, 239--247 (2021)

\bibitem{Palmer2021}
Palmer, W.R., Zheng, T.: Spectral clustering for directed networks.
\newblock In: Complex Networks \& Their Applications IX, pp. 87--99. Springer International Publishing (2021)

\bibitem{Peixoto2014}
Peixoto, T.P.: Hierarchical block structures and high-resolution model selection in large networks.
\newblock Physical Review X \textbf{4}(1), 011,047 (2014)

\bibitem{Ruggeri2023}
Ruggeri, N., Contisciani, M., Battiston, F., De~Bacco, C.: Community detection in large hypergraphs.
\newblock Science Advances \textbf{9}(30), eadg9159 (2023)

\bibitem{Takai2020}
Takai, Y., Miyauchi, A., Ikeda, M., Yoshida, Y.: Hypergraph clustering based on pagerank.
\newblock In: Proceedings of the 26th ACM SIGKDD Conference on Knowledge Discovery and Data Mining, pp. 2106--2114. ACM, New York, NY, USA (2020)

\bibitem{FreeLawProject}
{The Free Law Project}: Recap archive (2024).
\newblock \urlprefix\url{https://free.law/}

\bibitem{Veldt2022}
Veldt, N., Benson, A.R., Kleinberg, J.: Hypergraph cuts with general splitting functions.
\newblock SIAM Review \textbf{64}(3), 650--685 (2022)

\bibitem{Yamamoto1990}
Yamamoto, E.K., Leonard, K.G., Sodersten, S.J.: Summary judgment at the crossroads: The impact of the celotex trilogy.
\newblock U. Haw. L. Rev. \textbf{12}, 1 (1990)

\bibitem{Yoshida2016}
Yoshida, Y.: Nonlinear laplacian for digraphs and its applications to network analysis.
\newblock In: Proceedings of the Ninth ACM International Conference on Web Search and Data Mining (WSDM), pp. 587--596. ACM (2016)

\end{thebibliography}

\end{document}